%------------------------------------------------------------------------------
% Here please write the date of submission of paper or its revisions:
%------------------------------------------------------------------------------
%
\documentclass[12pt, reqno]{amsart}
\usepackage{amsmath, amsthm, amscd, amsfonts, amssymb, graphicx, color, enumitem, dsfont}
\usepackage[bookmarksnumbered, colorlinks, plainpages]{hyperref}
\hypersetup{colorlinks=true,linkcolor=red, anchorcolor=green, citecolor=cyan, urlcolor=red, filecolor=magenta, pdftoolbar=true}

\newcommand* {\ee}{\ensuremath{\mathrm{e}}}

\textheight 22.5truecm \textwidth 14.5truecm
\setlength{\oddsidemargin}{0.35in}\setlength{\evensidemargin}{0.35in}

\setlength{\topmargin}{-.5cm}

\newtheorem{theorem}{Theorem}[section]
\newtheorem{lemma}[theorem]{Lemma}

\newtheorem{corollary}[theorem]{Corollary}
\theoremstyle{definition}

\newtheorem{example}[theorem]{Example}

\newtheorem*{remark}{Remark}
\numberwithin{equation}{section}

\begin{document}

\setcounter{page}{1}

\title[Product formulas in the framework of mean ergodic theorems]{Product formulas in the framework of mean ergodic theorems}

\author[J. Z. Bern\'ad]{J. Z. Bern\'ad $^1$}

\address{$^{1}$Department of Physics, University of Malta, Msida MSD 2080, Malta}
\email{\textcolor[rgb]{0.00,0.00,0.84}{zsolt.bernad@um.edu.mt}}

\let\thefootnote\relax\footnote{Copyright 2016 by the Tusi Mathematical Research Group.}

\subjclass[2010]{Primary 47D03; Secondary 47A35, 47N50.}

\keywords{Chernoff's product formula, mean ergodic theorems, strongly continuous semigroups.}

\date{Received: Mar. 20, 2019; Accepted: zzzzzz.}

\begin{abstract}
An extension of Chernoff's product formula for one-parameter functions taking values in the space of bounded linear operators on a Banach space is given. Essentially, the $n$-th
one-parameter function in the product formula is mapped by the $n$-th iterate of a contraction acting on the space of the one-parameter functions. The motivation to study
this specific product formula lies in the growing field of dynamical control of quantum systems, involving the procedure of dynamical decoupling and also the Quantum Zeno effect.
\end{abstract} \maketitle

\section{Introduction and preliminaries}
It has been shown recently the convergence in norm of the following product formula \cite{Bernad}
 \begin{equation}
  \lim_{n \to \infty} u \ee^{\frac{t}{n} x} u \ee^{\frac{t}{n} x} \dots u \ee^{\frac{t}{n} x} \left(u^\dagger\right)^n=\ee^{\mathcal{P}x\,t}.
\label{limit1}
 \end{equation}
with $u$ unitary operator and $x$ bounded operator on a Hilbert space $\mathcal{H}$. Furthermore, $t \in \mathbb{C}$ with $|t|<\infty$ and
$\mathcal{P}$ is a projection operator which maps the elements of the set
\begin{equation}
 \Xi_{\mathcal{B}(\mathcal{H})}=\{x \in \mathcal{B}(\mathcal{H}): \lim \frac{1}{n}\sum^n_{k=1} u^k x \left(u^\dagger\right)^k \,\, 
 \text{exists}\} \nonumber
\end{equation}
onto the linear subspace 
\begin{equation}
\{x \in \mathcal{B}(\mathcal{H}): [u,x]=0\}. \label{invset}
\end{equation}
The convergence of \eqref{limit1} is due to the properties of the power bounded map $x \rightarrow u x u^\dagger$, which has been discussed 
in the context of mean ergodic theorems \cite{Krengel}. 

Eq. \eqref{limit1} is an example of a degenerate semigroup product formula \cite{Arendt}. In these cases on may apply 
\cite{Matolcsi, Hillier} Chernoff's product formula \cite{Chernoff1, Chernoff2} rather than Trotter's \cite{Trotter1}. However, this approach
works for \eqref{limit1}, when there exists a non-zero natural number $k$ such that 
$u^k$ is equal to the identity operator $I$. We may even generalize this special case of \eqref{limit1} to the following product formula
\begin{equation}
\lim_{n \to \infty} \left(u_1 \ee^{\frac{t}{n} x} u_2 \ee^{\frac{t}{n} x} \dots u_k \ee^{\frac{t}{n} x} \right)^{\frac{n}{k}} \label{13}
\end{equation}
with $n/k$ being a natural number and $u_1u_2\dots u_k$ equals to $I$. In order to lift these type of constraints on the unitary operators, one has to provide a generalization of Chernoff's result.
A partial answer to this question has been given in Ref. \cite{Bernad}, where only uniformly continuous semigroups are considered. However, Chernoff's Theorem has a much deeper approach, and therefore
we intend to further develop the statements of \cite{Bernad} such that an extended version of Chernoff’s product formula is provided. This is the main aim of our paper, where we employ results from ergodic theory. 
It is also natural to consider the extension of the map  $x \rightarrow u x u^\dagger$ to a more general 
contraction map. This generalization will include automatically the cases of Refs. \cite{Matolcsi, Exner} where the unitary operator $u$ is replaced by a projection operator leading to questions 
in the topic of Quantum Zeno effect. In fact, there is a great interest from the physics community to understand the above mentioned product formulas \cite{Facchi} and approaches focusing mainly on
uniformly continuous semigroups are generating up to date results \cite{Barankai, Mobus}.

The approach to the main result in Theorem \ref{maintheorem}, presented in the subsequent discussion, is formulated in the language of strongly continuous semigroups and mean ergodic theorems. 
Therefore, it is assumed familiarity with the basic facts concerning these topics \cite{Krengel, Engel}. Recently, extensions of the Chernoff's product formula has been proved \cite{Vuillermot} 
in the context of non-autonomous differential equations \cite{Tanabe} and general Trotter-Kato formulas \cite{Vuillermot2}. In this article, we focus on a different direction, 
motivated by product formulas like \eqref{limit1}, and for this purpose we shall need the following mean ergodic theorem.

Let us consider a Banach space
$\mathcal{B}$ and a power bounded linear transformation $\Pi: \mathcal{B} \rightarrow \mathcal{B}$, i.e., there exists a
constant $M\geqslant1$ such that 
\begin{equation}
 \|\Pi^n\|\leqslant M \quad n \in \mathbb{N} \nonumber.
\end{equation}
$\Pi$ is a contraction when $M=1$.  We consider the following two closed linear subspaces for $\Pi$ \cite{Krengel}
\begin{equation}
 \Xi_{\mathcal{B}}=\{x \in \mathcal{B}: \lim \frac{1}{n}\sum^n_{k=1} \Pi^kx \,\, \text{exists}\}, \nonumber
\end{equation}
and
\begin{equation}
\text{Ker}(I-\Pi)=\{x \in \mathcal{B}: \Pi x=x\}. \nonumber
\end{equation}
When $\mathcal{B}$ is reflexive, the theorems of E. R. Lorch in \cite{Lorch} guarantee $\Xi_{\mathcal{B}}=\mathcal{B}$ and $\Pi$ is mean ergodic 
on $\mathcal{B}$. The next theorem on the separation of $\Xi_{\mathcal{B}}$ is mostly due to K. Yosida \cite{Krengel,Yosida}:
\begin{theorem}
 \label{theorem1}
 Let $\Pi$ be a power bounded linear operator on a Banach space $\mathcal{B}$. Then 
 \begin{equation}
  \Xi_{\mathcal{B}}=\text{Ker}(I-\Pi) \oplus \overline{\text{Rng}(I-\Pi)}. \nonumber
 \end{equation}
The linear operator $\mathcal{P}x=\lim \frac{1}{n}\sum^n_{k=1} \Pi^kx$ assigned to $x \in \Xi_{\mathcal{B}}$ is the projection of $\Xi_{\mathcal{B}}$ 
onto $\text{Ker}(I-\Pi)$. We have $\mathcal{P}=\mathcal{P}^2=\Pi\mathcal{P}=\mathcal{P}\Pi$ and for any $y\in \mathcal{B}$ the assertions
\begin{enumerate}[label=\alph*)]
 \item $\lim \frac{1}{n}\sum^n_{k=1} \Pi^ky=0$
 \item $y \in \overline{\text{Rng}(I-\Pi)}$
\end{enumerate}
are equivalent.
\end{theorem}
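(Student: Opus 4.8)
The plan is to study the Cesàro averages $A_n = \frac{1}{n}\sum_{k=1}^n \Pi^k$, whose pointwise limit (where it exists) defines $\mathcal{P}$. Power boundedness gives $\|A_n\| \leq M$ uniformly in $n$, and two elementary telescoping identities will drive the whole argument. First,
\begin{equation}
(I-\Pi)A_n = A_n(I-\Pi) = \frac{1}{n}\left(\Pi - \Pi^{n+1}\right), \nonumber
\end{equation}
so that $\|(I-\Pi)A_n\| \leq 2M/n \to 0$. Second, for every $x \in \mathcal{B}$,
\begin{equation}
A_n x - x = \frac{1}{n}\sum_{k=1}^n(\Pi^k - I)x = -(I-\Pi)\left(\frac{1}{n}\sum_{k=1}^n\sum_{j=0}^{k-1}\Pi^j x\right) \in \text{Rng}(I-\Pi), \nonumber
\end{equation}
since each $\Pi^k - I$ factors through $I-\Pi$.

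Next I would analyse the action of $A_n$ on the two candidate summands. On $\text{Ker}(I-\Pi)$ one has $\Pi x = x$, hence $A_n x = x$ for all $n$, so $A_n \to I$ there. On $\text{Rng}(I-\Pi)$ the first identity gives $A_n(I-\Pi)z = \frac{1}{n}(\Pi-\Pi^{n+1})z \to 0$; extending to the closure by a density argument that exploits the uniform bound $\|A_n\|\le M$ shows $A_n \to 0$ on all of $\overline{\text{Rng}(I-\Pi)}$. These two facts immediately give $\text{Ker}(I-\Pi) \cap \overline{\text{Rng}(I-\Pi)} = \{0\}$, since an element of the intersection equals its own limit $0$; thus the sum is direct, and any $x = x_1 + x_2$ in it satisfies $A_n x \to x_1$, placing it in $\Xi_{\mathcal{B}}$ with $\mathcal{P}x = x_1$.

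The heart of the proof is the reverse inclusion $\Xi_{\mathcal{B}} \subseteq \text{Ker}(I-\Pi) \oplus \overline{\text{Rng}(I-\Pi)}$, and this is where I expect the only real subtlety. Take $x \in \Xi_{\mathcal{B}}$ with $A_n x \to y$. Applying the bounded operator $I-\Pi$ and using the first identity, $(I-\Pi)y = \lim (I-\Pi)A_n x = 0$, so $y \in \text{Ker}(I-\Pi)$. For the complementary part, the second identity shows $A_n x - x \in \text{Rng}(I-\Pi)$ for every $n$; since $\overline{\text{Rng}(I-\Pi)}$ is closed and $A_n x - x \to y - x$, we obtain $x - y \in \overline{\text{Rng}(I-\Pi)}$. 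Hence $x = y + (x-y)$ realises the decomposition, completing the direct-sum identity.

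Finally, the stated properties of $\mathcal{P}$ follow formally. Since $\mathcal{P}x = x_1 \in \text{Ker}(I-\Pi)$ we get $\Pi\mathcal{P} = \mathcal{P}$ and $\mathcal{P}^2 = \mathcal{P}$; the relation $A_n\Pi x - A_n x = \frac{1}{n}(\Pi^{n+1}-\Pi)x \to 0$ shows that $\Pi$ preserves $\Xi_{\mathcal{B}}$ and yields $\mathcal{P}\Pi = \mathcal{P}$. The equivalence of a) and b) is then a restatement of the above: b)$\,\Rightarrow\,$a) is precisely the fact that $A_n \to 0$ on $\overline{\text{Rng}(I-\Pi)}$, while a) says $y \in \Xi_{\mathcal{B}}$ with $\mathcal{P}y = 0$, so the decomposition forces $y \in \overline{\text{Rng}(I-\Pi)}$.
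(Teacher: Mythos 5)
Your proof is correct and complete. Be aware, though, that the paper does not actually prove Theorem \ref{theorem1}: it is quoted as a known result, ``mostly due to K. Yosida,'' with the proof deferred to Krengel's book and Yosida's original paper. What you have written is essentially the classical argument one finds in those references. Writing $A_n=\frac{1}{n}\sum_{k=1}^n\Pi^k$, your key ingredients are exactly the standard ones: the uniform bound $\|A_n\|\leqslant M$ from power boundedness, the telescoping identity $(I-\Pi)A_n=\frac{1}{n}\left(\Pi-\Pi^{n+1}\right)\to 0$, which shows that any limit of Ces\`{a}ro means lies in $\text{Ker}(I-\Pi)$, the factorization $A_nx-x\in\text{Rng}(I-\Pi)$, which captures the complementary summand, and the density argument (exploiting the uniform bound) that upgrades $A_n\to 0$ from $\text{Rng}(I-\Pi)$ to its closure. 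All the remaining steps check out: the trivial intersection $\text{Ker}(I-\Pi)\cap\overline{\text{Rng}(I-\Pi)}=\{0\}$, both inclusions of the direct-sum identity, the projection relations $\mathcal{P}=\mathcal{P}^2=\Pi\mathcal{P}=\mathcal{P}\Pi$ (including the observation $A_n\Pi x-A_nx=\frac{1}{n}\left(\Pi^{n+1}-\Pi\right)x\to 0$ needed for $\mathcal{P}\Pi=\mathcal{P}$), and the equivalence of a) and b). In short, your proposal buys self-containedness where the paper buys brevity by citation; there is no divergence in substance, since the proof you reconstructed is the standard one behind the cited result.
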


Recall also that a strongly continuous ($C_0$) semigroup $t \rightarrow T_t$ on $\mathcal{B}$ is such that
\begin{eqnarray}
 &1)& \quad T_0=\mathds{1}, \nonumber \\
 &2)& \quad T_sT_t=T_{t+s} \quad \forall t,s \geqslant 0, \nonumber \\
 &3)& \quad \lim_{t \to 0}\|T_t x - x\|=0 \quad \forall x \in \mathcal{B}. \nonumber
\end{eqnarray}
The strong derivative of $T_t$ at $t=0$ is a closed, densely-defined operator, the generator of $T_t$. From now on, we denote by $\mathcal{L}(\mathcal{B})$ the Banach space 
of all bounded linear operators on $\mathcal{B}$ endowed with the operator norm $\|.\|_{\mathcal{L}(\mathcal{B})}$.

\section{Main result}

Under the above conditions we are going to give an analytical proof of the following theorem:
\begin{theorem}\label{maintheorem}
 Let $V_t$ be a strongly continuous function from $\mathbb{R_+}$ to the linear contractions on the Banach space $\mathcal{B}$
 such that $V_0=\mathds{1}$. Consider a contraction $\Pi$ on $\mathcal{L}(\mathcal{B})$ such that $\Pi\mathds{1}=\mathds{1}$, whose Ces\`{a}ro mean
\begin{equation}
\frac{1}{n} \sum^n_{i=1} \Pi^ix, \quad x \in \mathcal{L}(\mathcal{B}) \nonumber  
\end{equation}
converges to the projector $\mathcal{P}$ projecting onto the linear subspace
 \begin{equation}
\text{Ker}(I-\Pi)=\{x \in \mathcal{L}(\mathcal{B}): \Pi x=x\}. \nonumber
\end{equation}
Suppose that $V_t \in \Xi_{\mathcal{L}(\mathcal{B})}$ and the closure $\overline{A}$ of $A=\frac{d}{dt} \mathcal{P}(V_t)$ at $t=0$ is the generator of a $C_0$ 
contraction semigroup. Then
\begin{equation}
 \Pi(V_{t/n})\Pi^2(V_{t/n}) \dots \Pi^n(V_{t/n}) \nonumber
\end{equation}
converges to $e^{t\overline{A}}$ in the strong operator topology.
\end{theorem}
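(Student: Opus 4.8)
The plan is to isolate a \emph{mean term}, governed directly by Chernoff's product formula, from a \emph{fluctuation term} that is forced to vanish by the mean ergodic theorem (Theorem \ref{theorem1}). Throughout write $B_i:=\Pi^i(V_{t/n})$, so that the object of interest is $\prod_{i=1}^n B_i$, and set $W:=\mathcal{P}(V_{t/n})$. Since $\Pi\mathds{1}=\mathds{1}$ we have $\mathcal{P}\mathds{1}=\mathds{1}$, so by Theorem \ref{theorem1} the argument $V_{t/n}\in\Xi_{\mathcal{L}(\mathcal{B})}$ splits as
\begin{equation}
B_i=\mathcal{P}(V_{t/n})+\Pi^i(I-\mathcal{P})(V_{t/n})=W+R_i,\qquad R_i:=\Pi^i(I-\mathcal{P})(V_{t/n}),\nonumber
\end{equation}
where $W\in\text{Ker}(I-\Pi)$ is $\Pi$-fixed and each $R_i\in\overline{\text{Rng}(I-\Pi)}$. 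Because $\Pi$ is a contraction and each $V_{t/n}$ is a contraction, every $B_i$, and $W$, is a contraction on $\mathcal{B}$; hence all products below are contractions, which is what keeps the telescoping estimates bounded. As all operators in sight are contractions, it suffices by a standard density argument to prove the strong convergence on the core $D$ on which $Ax=\lim_{s\to0}s^{-1}\big(\mathcal{P}(V_s)-\mathds{1}\big)x$ exists.

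First I would dispose of the mean term. The family $s\mapsto\mathcal{P}(V_s)$ consists of contractions (as $\mathcal{P}$ is a Ces\`{a}ro limit of the contractions $\Pi^k$, hence itself a contraction), satisfies $\mathcal{P}(V_0)=\mathds{1}$, and has $A$ as its strong derivative at $s=0$ with $\overline{A}$ generating a $C_0$ contraction semigroup. Granting the strong continuity of $s\mapsto\mathcal{P}(V_s)$, which I would verify from the continuity of $\mathcal{P}$ on the bounded set of contractions, Chernoff's product formula \cite{Chernoff1} applies verbatim to this single contraction-valued function and gives
\begin{equation}
W^n=\mathcal{P}(V_{t/n})^n\longrightarrow e^{t\overline{A}}\nonumber
\end{equation}
strongly. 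Thus the theorem reduces to showing that $\prod_{i=1}^n B_i$ and the pure power $W^n$ have the same strong limit, i.e. $\big(\prod_{i=1}^n B_i-W^n\big)x\to0$ for $x\in D$.

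The remaining identity is the telescoping
\begin{equation}
\prod_{i=1}^n B_i-W^n=\sum_{k=1}^n\Big(\prod_{i=1}^{k-1}B_i\Big)\,R_k\,W^{n-k},\nonumber
\end{equation}
and here lies the heart of the argument. The naive bound $\sum_{k=1}^n\|R_k W^{n-k}x\|$ is useless: since $V_t$ is only strongly, not uniformly, continuous, $R_k=\Pi^k(I-\mathcal{P})(V_{t/n}-\mathds{1})$ need not be small in norm, so there is no termwise decay. The decay must instead be extracted as \emph{cancellation} from the mean ergodic theorem, which controls precisely the averaged quantity $\frac1n\sum_{k=1}^m R_k=\frac{m}{n}\cdot\frac1m\sum_{k=1}^m\Pi^k(I-\mathcal{P})(V_{t/n})$ through the equivalence (a)$\Leftrightarrow$(b) of Theorem \ref{theorem1}. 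My plan is therefore an Abel summation applied to the telescoping sum: writing $\Sigma_m:=\sum_{k=1}^m R_k$, one transfers the $k$-sum onto the partial sums $\Sigma_m$, which the ergodic theorem forces to be $o(n)$, while the left factors $\prod_{i<k}B_i$ and the right factors $W^{n-k}$ are contractions whose increments in $k$ are controlled (consecutive right factors differing by a single $W=\mathds{1}+O(t/n)$ on $D$). The pairing of slowly varying contraction weights against a mean-zero sequence is a discrete Riemann--Lebesgue mechanism, and it is this cancellation, rather than any size estimate on individual $R_k$, that drives the remainder to zero.

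The main obstacle is that this is a genuine \emph{diagonal} limit: the number of factors tends to infinity while the argument $V_{t/n}\to\mathds{1}$ simultaneously, whereas Theorem \ref{theorem1} supplies ergodic convergence only for a \emph{fixed} element of $\mathcal{L}(\mathcal{B})$. To close the argument I would establish a uniform, or equicontinuous, version of the mean ergodic convergence along the vanishing family $(I-\mathcal{P})(V_{t/n})$, exploiting that $(I-\mathcal{P})(V_{t/n})x=(I-\mathcal{P})(V_{t/n}-\mathds{1})x\to0$ strongly as $n\to\infty$; this strong smallness, unavailable in norm, is exactly what must be fed into the summation-by-parts estimate to beat the $o(n)$ growth of $\Sigma_m$ down to $o(1)$. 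Controlling the non-commutativity of the left factors $\prod_{i<k}B_i$, which cannot simply be replaced by $W^{k-1}$ without re-introducing the same remainder, is the secondary difficulty, to be handled by keeping these factors inside the norm as contractions throughout the summation by parts.
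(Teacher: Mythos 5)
Your splitting into a mean term $W^n=[\mathcal{P}(V_{t/n})]^n$ and a fluctuation term, and your treatment of the mean term, match the paper in substance: the paper proves $[\mathcal{P}(V_{t/n})]^n\to e^{t\overline{A}}$ by re-running Chernoff's argument (the Trotter--Kato approximation, Theorem \ref{lemma1}; the $\sqrt{n}$-lemma, Lemma \ref{lemma2}; density plus uniform boundedness, Lemma \ref{lemma3}), which is what your direct appeal to Chernoff's theorem amounts to. (A quibble: your verification of strong continuity of $s\mapsto\mathcal{P}(V_s)$ from ``continuity of $\mathcal{P}$'' is invalid, since $\mathcal{P}$ is only norm-continuous on $\mathcal{L}(\mathcal{B})$ while $s\mapsto V_s$ is not norm-continuous; this is harmless only because, for fixed $t$, Chernoff's argument needs no continuity in $s$ beyond differentiability at $0$ on a dense domain.) The genuine gap is in the fluctuation term, and you flag it yourself: the ``uniform, or equicontinuous, version of the mean ergodic convergence along the vanishing family $(I-\mathcal{P})(V_{t/n})$'' is never constructed, and it is precisely the content of the theorem. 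Worse, the mechanism you propose cannot produce it. Theorem \ref{theorem1} gives, for a \emph{fixed} element of $\overline{\text{Rng}(I-\Pi)}$, only $\Sigma_m=o(m)$, with no rate and no uniformity over the $n$-dependent arguments. And even granting the best conceivable bound on the partial sums (the $O(t\|y\|/n)$ bound coming from an exact coboundary), plain summation by parts still fails: the increments of your left weights, $L_{k+1}-L_k=L_k(B_k-\mathds{1})=L_k\,\Pi^k(V_{t/n}-\mathds{1})$, are not norm-small (strong smallness of $V_{t/n}-\mathds{1}$ does not pass through $\Pi^k$, nor does it help on the moving vectors $\Sigma_kW^{n-k-1}x$), so you are left with $n$ terms of size $O(1/n)$, i.e.\ an $O(1)$ total. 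No size estimate can close this; one needs an exact algebraic cancellation.

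That cancellation is the paper's Lemma \ref{lemma4}, and it is the idea missing from your proposal. Using the Yosida splitting of Theorem \ref{theorem1}, one first takes the fluctuating part to be an exact coboundary, $x=\mathcal{P}x+y-\Pi y$, so that partial sums of $\Pi^k(y-\Pi y)$ telescope to $\Pi y-\Pi^{m+1}y$, uniformly bounded by $2\|y\|$. Then, instead of Abel summation, one compares the product with the interpolants $S_i=(\mathds{1}+t\mathcal{P}x)^i(\mathds{1}+t\Pi y-t\Pi^{i+1}y)(\mathds{1}+t\Pi^{i+1}x)\cdots(\mathds{1}+t\Pi^{n}x)$, in which the accumulated coboundary enters as a \emph{multiplicative} correction. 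Because $\Pi^{i+1}x=\mathcal{P}x+\Pi^{i+1}y-\Pi^{i+2}y$, the terms of first order in $t$ in $S_{i+1}-S_i$ cancel identically, leaving $O\bigl(t^2(\|x\|\|y\|+\|y\|^2)\bigr)$ per step; after $t\to t/n$ this yields the explicit operator-norm rate \eqref{touse}, and general elements of $\overline{\text{Rng}(I-\Pi)}$ are reached through the Lipschitz estimate \eqref{ineq5} plus an $\epsilon'$-approximation. It is this quantitative bound, with constants depending only on $\|x\|$ and $\|y\|$, that stands in for the ``uniform ergodic theorem'' your plan presupposes (and it is the only thing that can make sense of the diagonal application to the $n$-dependent argument $A_n=(V_{t/n}-\mathds{1})/(t/n)$ in the last step of the paper's proof, the very point you correctly identified as the obstacle). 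Without Lemma \ref{lemma4} or an equivalent, your argument does not close.
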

The theorem will be proved via a theorem and three lemmas, whereas some of them are already known.
\begin{theorem} \label{lemma1}
Let $\{T_n(t)\}_{t \geqslant 0}$, $n=0,1,2,\dots$, be $C_0$ semigroups on the Banach space $\mathcal{B}$ with generators $A_n$ satisfying 
the stability condition
\begin{equation}
 \|T_n(t) \|\leqslant M e^{\omega t} \label{stability}
\end{equation}
where the constants $M \geqslant 1$ and $\omega \in \mathbb{R}$ are independent of $n$ and $t$. Define $A=\lim_{n \to \infty} A_n$ and suppose that
\begin{enumerate}[label=\alph*)]
 \item $A$ is densely defined,
 \item for some $\lambda > \omega$ the range $\text{Rng}(\lambda-A)$ is dense in $\mathcal{B}$.
\end{enumerate}
Then the closure of $A$ is the infinitesimal generator of a $C_0$ semigroup $T(t)$, which satisfies $T(t)=\lim_{n \to \infty} T_n(t)$ in the strong operator topology
and uniformly for every compact interval $[0, t_0]$.
\end{theorem}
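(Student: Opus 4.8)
The plan is to run the classical Trotter--Kato argument, passing through resolvents: first convert the uniform semigroup bound \eqref{stability} into uniform resolvent bounds, then establish strong convergence of the resolvents $R(\lambda,A_n)=(\lambda-A_n)^{-1}$, use the limiting family to manufacture the generator $\overline{A}$ via the theory of pseudo-resolvents, and finally transfer resolvent convergence back to strong convergence of the semigroups. Throughout, $A=\lim_{n\to\infty}A_n$ is read as $A_nx\to Ax$ for $x$ in the common domain $D(A)$.

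First I would invoke the Hille--Yosida/Feller--Miyadera--Phillips generation theorem: \eqref{stability} gives, for every real $\lambda>\omega$, that $\lambda\in\rho(A_n)$ with $\|R(\lambda,A_n)^k\|\leqslant M/(\lambda-\omega)^k$, so in particular $\|R(\lambda,A_n)\|\leqslant M/(\lambda-\omega)$ uniformly in $n$. To obtain convergence, fix $\lambda>\omega$ and, for $x\in D(A)$ with $y=(\lambda-A)x\in\text{Rng}(\lambda-A)$, use the algebraic identity
\[
R(\lambda,A_n)y-x=R(\lambda,A_n)(A_n-A)x,
\]
whose norm is at most $\frac{M}{\lambda-\omega}\|(A_n-A)x\|\to 0$. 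Since $\text{Rng}(\lambda-A)$ is dense by hypothesis b) and the resolvents are uniformly bounded, this pointwise convergence extends to all of $\mathcal{B}$, yielding a bounded operator $R(\lambda)$ with $R(\lambda,A_n)\to R(\lambda)$ strongly and $\|R(\lambda)\|\leqslant M/(\lambda-\omega)$.

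Next I would identify the limit as a genuine resolvent. Passing to the limit in the resolvent identity $R(\lambda,A_n)-R(\mu,A_n)=(\mu-\lambda)R(\lambda,A_n)R(\mu,A_n)$ shows that $\{R(\lambda)\}_{\lambda>\omega}$ is a pseudo-resolvent. Its range contains $D(A)$ (because $R(\lambda)(\lambda-A)x=x$), hence is dense by hypothesis a); together with the uniform bound $\|\lambda R(\lambda)\|\leqslant M\lambda/(\lambda-\omega)$ this forces the common null space to be trivial, since $\lambda R(\lambda)\to I$ strongly. By the pseudo-resolvent lemma, $R(\lambda)=(\lambda-B)^{-1}$ for a unique densely defined closed operator $B$, and comparison with $R(\lambda)(\lambda-A)x=x$ together with closedness of $B$ and closedness of $\text{Rng}(\lambda-\overline{A})$ identifies $B=\overline{A}$. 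The inherited bounds $\|R(\lambda,\overline{A})^k\|\leqslant M/(\lambda-\omega)^k$ and Hille--Yosida then produce the $C_0$ semigroup $T(t)=\ee^{t\overline{A}}$ with $\|T(t)\|\leqslant M\ee^{\omega t}$.

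Finally, with strong resolvent convergence in hand, I would transfer to the semigroups through the Yosida approximations $A_{n,\lambda}=\lambda^2R(\lambda,A_n)-\lambda$ and $\overline{A}_\lambda=\lambda^2R(\lambda,\overline{A})-\lambda$, which are bounded and satisfy $A_{n,\lambda}\to\overline{A}_\lambda$ strongly; for fixed $\lambda$ the exponentials then converge, $\ee^{tA_{n,\lambda}}\to\ee^{t\overline{A}_\lambda}$, strongly and uniformly on $[0,t_0]$. The splitting
\[
\|T_n(t)x-T(t)x\|\leqslant\|T_n(t)x-\ee^{tA_{n,\lambda}}x\|+\|\ee^{tA_{n,\lambda}}x-\ee^{t\overline{A}_\lambda}x\|+\|\ee^{t\overline{A}_\lambda}x-T(t)x\|
\]
then closes the argument on the dense set $D(\overline{A})$: the outer terms are controlled uniformly in $n$ and in $t\in[0,t_0]$ by the standard Yosida-approximation estimate (which rests on $\lambda R(\lambda,A_n)x\to x$ uniformly in $n$ and on the stability bound), while the middle term vanishes as $n\to\infty$; density and the uniform bound $M\ee^{\omega t}$ extend convergence to all $x$. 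I expect this last transfer to be the main obstacle: making the Yosida estimates uniform in $n$ and in $t$ over the compact interval, so that the three-term splitting produces convergence that is genuinely uniform on $[0,t_0]$ rather than merely pointwise in $t$. The pseudo-resolvent identification in the third step is the other delicate point, as it is precisely where hypotheses a) and b) are consumed.
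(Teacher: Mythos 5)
Your proposal is correct and follows essentially the same route as the paper, which offers no proof of its own but simply defers to Trotter \cite{Trotter2} together with Kato \cite{Kato}: the argument in those references is exactly your resolvent-based scheme (uniform Hille--Yosida bounds, strong resolvent convergence on the dense range $\text{Rng}(\lambda-A)$, identification of the limiting pseudo-resolvent with $R(\lambda,\overline{A})$, then transfer back to the semigroups). The two delicate points you flag are precisely the historically relevant ones --- the pseudo-resolvent identification is the gap in Trotter's original paper that Kato's note repairs, and the uniform-in-$n$ Yosida estimates close via $\|\lambda R(\lambda,A_n)x-x\|\leqslant M\|A_nx\|/(\lambda-\omega)$ with $\sup_n\|A_nx\|<\infty$ on $D(A)$ --- so your sketch is a faithful reconstruction of the cited proof.
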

\begin{proof}
 It is in fact the Trotter-Kato approximation theorem, see \cite{Trotter2} together with \cite{Kato}. Another approach in 
 \cite{Chernoff1} considers $\mathcal{B}$ to be separable.
\end{proof}

\begin{lemma} \label{lemma2}
Let $S$ be a linear contraction on the Banach space $\mathcal{B}$. Then we have
\begin{equation}
\|e^{n(S-I)}x-S^nx \| \leqslant \sqrt{n} \|Sx-x\|
\end{equation}
for all $x \in \mathcal{B}$ and $n=0,1,2,\dots$
\end{lemma}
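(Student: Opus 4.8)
The plan is to exploit the norm-convergent series representation $e^{n(S-I)} = e^{-n}\sum_{k=0}^{\infty}\frac{n^k}{k!}S^k$, valid because $S$ is a bounded operator. The starting observation is that $e^{-n}\sum_{k=0}^{\infty}\frac{n^k}{k!}=1$, so the fixed vector $S^nx$ can be absorbed into the series, giving
\[
e^{n(S-I)}x - S^nx = e^{-n}\sum_{k=0}^{\infty}\frac{n^k}{k!}\bigl(S^kx - S^nx\bigr).
\]
Applying the triangle inequality then reduces the whole estimate to bounding the individual differences $\|S^kx - S^nx\|$ and summing the resulting scalar series. The case $n=0$ is trivial, since both sides vanish.

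The second step is a contraction estimate. Writing $S^kx - S^nx = S^{\min(k,n)}\bigl(S^{|k-n|}x - x\bigr)$ and telescoping $S^{m}x - x = \sum_{j=0}^{m-1}S^j(Sx-x)$ with $m=|k-n|$, the contractivity $\|S\|\leqslant 1$ yields at once
\[
\|S^kx - S^nx\| \leqslant |k-n|\,\|Sx-x\|.
\]
Substituting this into the series and factoring out $\|Sx-x\|$ leaves the scalar coefficient $e^{-n}\sum_{k=0}^{\infty}\frac{n^k}{k!}|k-n|$ to be controlled.

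The final and most delicate step, which I expect to be the main obstacle, is to show that this coefficient is bounded by $\sqrt{n}$. I would recognize $e^{-n}n^k/k!$ as the weights of a Poisson law with mean $n$, so that the coefficient is its mean absolute deviation $\mathbb{E}|N-n|$. By the Cauchy--Schwarz inequality applied to the weights,
\[
e^{-n}\sum_{k=0}^{\infty}\frac{n^k}{k!}|k-n| \leqslant \left(e^{-n}\sum_{k=0}^{\infty}\frac{n^k}{k!}\right)^{1/2}\left(e^{-n}\sum_{k=0}^{\infty}\frac{n^k}{k!}(k-n)^2\right)^{1/2}.
\]
The first factor equals $1$, while the second is $\sqrt{\mathrm{Var}(N)}=\sqrt{n}$, the variance of the Poisson distribution coinciding with its mean; purely combinatorially, the identity $\sum_{k=0}^{\infty}\frac{n^k}{k!}(k-n)^2 = n\,e^{n}$ follows from differentiating the exponential series twice. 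Combining the three steps gives $\|e^{n(S-I)}x - S^nx\|\leqslant \sqrt{n}\,\|Sx-x\|$, completing the proof.
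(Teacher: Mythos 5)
Your proof is correct and takes essentially the same approach as the one the paper relies on: the paper offers no argument of its own, merely citing Lemma 2 of Chernoff's note \cite{Chernoff1}, and your three steps (the Poisson-weighted expansion of $e^{n(S-I)}$, the telescoping bound $\|S^kx-S^nx\|\leqslant |k-n|\,\|Sx-x\|$, and the Cauchy--Schwarz estimate of the mean absolute deviation by $\sqrt{n}$) reproduce Chernoff's original argument exactly. Your write-up has the added merit of being self-contained where the paper is not.
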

\begin{proof}
 Can be found in Lemma $2$ of \cite{Chernoff1}.
\end{proof}
\begin{lemma}\label{lemma3}
 A sequence of operators $T_n \in \mathcal{L}(\mathcal{B})$, where $\mathcal{B}$ is a Banach space, converges strongly to an operator $T \in \mathcal{L}(\mathcal{B})$ iff
 \begin{enumerate}[label=\alph*)]
 \item the sequence $\{T_n x\}$ converges for any $x$ in a dense subset $D\subset\mathcal{B}$
 \item and there exists $M>0$ such that $\|T_n\|_{\mathcal{L}(\mathcal{B})} \leqslant M$.
\end{enumerate}
\end{lemma}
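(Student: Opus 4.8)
The plan is to prove the two directions of the equivalence separately, treating this as a standard density-plus-boundedness characterization of strong convergence. The forward implication is the easier one: assuming $T_n \to T$ in the strong operator topology, property (a) is immediate, since $\{T_n x\}$ converges for \emph{every} $x \in \mathcal{B}$ and in particular on any dense subset $D$. For property (b) I would invoke the uniform boundedness principle. Strong convergence forces each orbit $\{T_n x\}$ to be a convergent, hence bounded, sequence, so $\sup_n \|T_n x\| < \infty$ for every fixed $x$; since $\mathcal{B}$ is complete, the Banach--Steinhaus theorem then upgrades this pointwise bound to the uniform bound $\sup_n \|T_n\|_{\mathcal{L}(\mathcal{B})} = M < \infty$.

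For the reverse implication, the main work is to show that (a) and (b) together force $\{T_n x\}$ to converge for \emph{all} $x \in \mathcal{B}$, not merely on $D$. I would run an $\varepsilon/3$ argument to show the sequence is Cauchy: given $x \in \mathcal{B}$ and $\varepsilon > 0$, choose $y \in D$ with $\|x - y\|$ small, and estimate
\begin{equation}
\|T_n x - T_m x\| \leqslant \|T_n\|\,\|x-y\| + \|T_n y - T_m y\| + \|T_m\|\,\|x-y\|. \nonumber
\end{equation}
The outer two terms are controlled uniformly in $n,m$ by the bound $M$ supplied by (b), while the middle term is made small for all large $n,m$ because $\{T_n y\}$ converges by (a). Hence $\{T_n x\}$ is Cauchy, and completeness of $\mathcal{B}$ guarantees a limit.

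It then remains to define $Tx := \lim_{n\to\infty} T_n x$ and to verify that $T \in \mathcal{L}(\mathcal{B})$. Linearity of $T$ passes to the limit from the linearity of each $T_n$, and boundedness follows from $\|Tx\| = \lim_n \|T_n x\| \leqslant M\|x\|$, so that $\|T\|_{\mathcal{L}(\mathcal{B})} \leqslant M$. By construction $T_n \to T$ in the strong operator topology, which closes the reverse direction.

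I expect the only genuinely non-routine ingredient to be the uniform boundedness principle in the forward direction; everything else is a standard $\varepsilon/3$ estimate combined with completeness. The hypothesis that $\mathcal{B}$ is a Banach space (rather than merely normed) is essential at two points: it underlies the Banach--Steinhaus step, and it furnishes the limit of the Cauchy sequence in the reverse direction.
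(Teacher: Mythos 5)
Your proof is correct and is essentially the argument behind the paper's proof: the paper simply cites Theorem II.1.18 of Dunford--Schwartz, whose proof is exactly this combination of the Banach--Steinhaus theorem for the forward direction and the $\varepsilon/3$ density-plus-completeness estimate for the converse. You have merely written out in full the standard details that the paper delegates to the reference.
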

\begin{proof}
Is a special case of the proof of Theorem II.$1.18$ in \cite{Dunford}.  
\end{proof}
\begin{lemma} \label{lemma4}
Let $\Pi$ be a contraction on a Banach space $\mathcal{B}$ such that $\Pi\mathds{1}=\mathds{1}$. Suppose that $x \in 
\Xi_{\mathcal{B}}$ with $\| \mathds{1}+\frac{t}{n} x\|\leqslant1$ for all $t > 0$ and $n=1,2,\dots$. Then, together with the 
projector $\mathcal{P}$ projecting onto the linear subspace $\text{Ker}(I-\Pi)$
\begin{eqnarray}
 \lim_{n \to \infty}\| (\mathds{1}+\frac{t}{n} \mathcal{P}x )^n-(\mathds{1}+\frac{t}{n} \Pi x) (\mathds{1}+\frac{t}{n} \Pi^2x) \dots (\mathds{1}+\frac{t}{n} \Pi^nx) \|=0. \label{toprove}
\end{eqnarray}
\end{lemma}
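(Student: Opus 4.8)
The plan is to rewrite both factors in \eqref{toprove} as images of the single element $\mathds{1}+\frac{t}{n}x$ under contractions, and then to compare the two products through a telescoping identity. Writing $a_j=\mathds{1}+\frac{t}{n}\Pi^j x=\Pi^j\!\left(\mathds{1}+\frac{t}{n}x\right)$ and $b=\mathds{1}+\frac{t}{n}\mathcal{P}x=\mathcal{P}\!\left(\mathds{1}+\frac{t}{n}x\right)$ — using $\Pi\mathds{1}=\mathds{1}$, hence $\mathcal{P}\mathds{1}=\mathds{1}$ — the hypothesis $\|\mathds{1}+\frac{t}{n}x\|\leqslant1$ together with $\|\Pi\|\leqslant1$ and $\|\mathcal{P}\|\leqslant1$ gives $\|a_j\|\leqslant1$ and $\|b\|\leqslant1$. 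I would then invoke the standard telescoping identity
\[
b^n-a_1a_2\cdots a_n=\sum_{j=1}^n b^{j-1}(b-a_j)\,a_{j+1}\cdots a_n,
\]
and observe that, writing $y=(I-\mathcal{P})x$, Theorem \ref{theorem1} places $y\in\overline{\text{Rng}(I-\Pi)}$ while $\Pi^j x=\mathcal{P}x+\Pi^j y$, so that $b-a_j=-\frac{t}{n}\Pi^j y$. Thus the whole problem reduces to showing that $\frac{t}{n}\sum_{j=1}^n b^{j-1}(\Pi^j y)\,a_{j+1}\cdots a_n$ tends to $0$ in norm.

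The main obstacle is that the naive estimate, bounding each summand by $\frac{t}{n}\|\Pi^j y\|$, only yields $\frac{t}{n}\sum_{j}\|\Pi^j y\|$, which need not vanish: if $\Pi$ acts isometrically on part of $\overline{\text{Rng}(I-\Pi)}$, then $\|\Pi^j y\|=\|y\|$ for every $j$ and the average of the norms stays equal to $\|y\|$. Mean ergodicity controls $\frac1n\|\sum_j\Pi^j y\|$, the norm of the average, and not the average of the norms, so the cancellation must be preserved. To exploit it I would first treat $y\in\text{Rng}(I-\Pi)$ exactly, writing $y=(I-\Pi)z$ so that $\Pi^j y=\Pi^j z-\Pi^{j+1}z$ telescopes, and then pass to the closure by density at the very end.

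With $w_j=\Pi^j z$ and $P_j=a_{j+1}\cdots a_n$ the inner sum becomes $\sum_{j=1}^n b^{j-1}(w_j-w_{j+1})P_j$, to which I would apply summation by parts. Since $P_{j-1}=a_jP_j$, the consecutive differences produce $b^{j-2}(bw_j-w_ja_j)P_j=b^{j-2}\bigl(\tfrac{t}{n}\mathcal{P}x\,w_j-\tfrac{t}{n}w_j\Pi^j x\bigr)P_j$, whose norm is $O(1/n)$ because $\|w_j\|\leqslant\|z\|$, $\|b\|,\|P_j\|\leqslant1$, and the bracket carries a factor $\frac{t}{n}$; there are $n-1$ such terms, together with two boundary terms bounded by $\|z\|$. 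Hence the inner sum is $O(1)$, and after the prefactor $\frac{t}{n}$ the whole expression is $O(1/n)\to0$.

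Finally, for general $y\in\overline{\text{Rng}(I-\Pi)}$ I would choose, for each $\varepsilon>0$, an element $y_\varepsilon=(I-\Pi)z_\varepsilon$ with $\|y-y_\varepsilon\|<\varepsilon$. Replacing $y$ by $y_\varepsilon$ changes the inner sum by at most $\sum_{j=1}^n\|\Pi^j(y-y_\varepsilon)\|\leqslant n\varepsilon$, so the corresponding contribution to the left-hand side of \eqref{toprove} is at most $t\varepsilon$, while the $y_\varepsilon$-part vanishes by the previous step. Letting $n\to\infty$ and then $\varepsilon\to0$ yields the claim.
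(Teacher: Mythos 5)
Your argument is correct, and it reaches \eqref{toprove} by a route that differs from the paper's in instructive ways, although both share the same skeleton: decompose $x=\mathcal{P}x+y$ via Theorem \ref{theorem1}, settle $y\in\text{Rng}(I-\Pi)$ exactly with an $O(1/n)$ bound, then extend to $\overline{\text{Rng}(I-\Pi)}$ by density. In the dense case the paper telescopes over hybrid interpolating operators $S_i=(\mathds{1}+t\mathcal{P}x)^i(\mathds{1}+t\Pi y-t\Pi^{i+1}y)(\mathds{1}+t\Pi^{i+1}x)\cdots(\mathds{1}+t\Pi^nx)$, the accumulated middle factor carrying the cancellation; you instead linearize first, via the identity $b^n-a_1\cdots a_n=\sum_j b^{j-1}(b-a_j)a_{j+1}\cdots a_n$, and then harvest the cancellation hidden in $\Pi^jy=w_j-w_{j+1}$ by summation by parts. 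Both give the same $O(1/n)$ rate: your bound $\tfrac{t}{n}\left(2\|z\|+2t\|x\|\,\|z\|\right)$ is the analogue of \eqref{touse}. The more substantive difference is the density step. You approximate only in the linear slot: the contractions $b$, $a_j$, $P_j$ are always built from the \emph{original} $x$, and your Abel estimate uses nothing about $z_\varepsilon$ beyond $\|\Pi^jz_\varepsilon\|\leqslant\|z_\varepsilon\|$, so the hypothesis $\|\mathds{1}+\tfrac{t}{n}x\|\leqslant1$ is needed only for the given $x$. (Do make that independence explicit: it is exactly what licenses the sentence ``the $y_\varepsilon$-part vanishes by the previous step,'' since $y_\varepsilon$ is no longer $(I-\mathcal{P})x$.) The paper instead approximates $x$ itself by some $x_1\in\text{Ker}(I-\Pi)\oplus\text{Rng}(I-\Pi)$ and compares the two full products; for its inequalities \eqref{ineq5} and \eqref{ineq6} to apply, the approximant must \emph{also} satisfy $\|\mathds{1}+\tfrac{t}{n}x_1\|\leqslant1$, a condition the paper asserts can be arranged together with $\mathcal{P}x_1=\mathcal{P}x_2$ and $\|x_1-x_2\|<\epsilon'$ but never verifies. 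Your formulation sidesteps that subtlety entirely, which is a modest but genuine improvement over the printed proof.
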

\begin{proof}
 Let us fix $n$. According to Theorem \ref{theorem1}
 \begin{equation}
  \Xi_{\mathcal{B}}=\text{Ker}(I-\Pi) \oplus \overline{\text{Rng}(I-\Pi)}. \nonumber
 \end{equation}
Suppose first that $x \in \text{Ker}(I-\Pi) \oplus \text{Rng}(I-\Pi)$, so
therefore exists a $y \in \mathcal{B}$ such that $x=\mathcal{P}x+y-\Pi y$ and we define the following operator:
\begin{equation}
 S_i= (\mathds{1}+t \mathcal{P}x)^i  (\mathds{1}+t \Pi y-t \Pi^{i+1}y) (\mathds{1}+t \Pi^{i+1}x) \dots (\mathds{1}+t \Pi^{n}x),\, 1\leqslant i < n. \nonumber
\end{equation} 
Then
\begin{eqnarray}
 &&\| (\mathds{1}+t \mathcal{P}x )^n-(\mathds{1}+t \Pi x) (\mathds{1}+t \Pi^2x) \dots (\mathds{1}+t \Pi^nx) \| \nonumber \\
&& \leqslant \| (\mathds{1}+t \mathcal{P}x)^n -(\mathds{1}+t \mathcal{P}x)^n  (\mathds{1}+t \Pi y-t \Pi^{n+1}y)\| \nonumber \\
&&+\|(\mathds{1}+t \mathcal{P}x)^n  (\mathds{1}+t \Pi y-t \Pi^{n+1}y)
 -S_{n-1}\|+\sum^{n-2}_{i=1} \|S_{i+1} -S_{i}\| \nonumber \\
 &&+\| S_1-(\mathds{1}+t \Pi x) (\mathds{1}+t \Pi^2x) \dots (\mathds{1}+t \Pi^nx) \|. \label{ineqtodo}
\end{eqnarray}
We are going to upper bound every term on the right hand side of the above inequality. For the first term, we have 
\begin{eqnarray}
 &&\| (\mathds{1}+t \mathcal{P}x)^n -(\mathds{1}+t \mathcal{P}x)^n  (\mathds{1}+t \Pi y-t \Pi^{n+1}y)\| \nonumber \\
 &&\leqslant \|\mathcal{P} (\mathds{1} + t x) \|^n \| t \Pi y-t \Pi^{n+1}y\|_{\mathcal{B}} \leqslant 2 t \|y \|, \label{ineq1}
\end{eqnarray}
where we have used that $\mathcal{P}$ is a projector, $\Pi$ a contraction and $\| \mathds{1}+t x\|\leqslant1$. The other terms yield
\begin{eqnarray}
 \|(\mathds{1}+t \mathcal{P}x)^n  (\mathds{1}+t \Pi y-t \Pi^{n+1}y)-S_{n-1}\| \leqslant 4 t^2 \left(\|x \| \|y \|+\|y \|^2 \right), \label{ineq2} 
\end{eqnarray}
for $1\leqslant i < n-1$
\begin{eqnarray}
 \|S_{i+1} -S_{i}\| &\leqslant& \|\mathcal{P} (\mathds{1} + t x) \|^i 4 t^2 \left(\|x \| \|y \|+\|y \|^2 \right) \underbrace{\|\Pi^{i+2} (\mathds{1} + t x) \|\dots}_{n-i-1 \, \text{terms}} \nonumber \\
 &\leqslant& 4 t^2 \left(\|x \| \|y \|+\|y \|^2 \right), \label{ineq3}
\end{eqnarray}
and finally
\begin{eqnarray}
 \| S_1-(\mathds{1}+t \Pi x) (\mathds{1}+t \Pi^2x) \dots (\mathds{1}+t \Pi^nx) \| \leqslant 4 t^2 \left(\|x \| \|y \|+\|y \|^2 \right). \nonumber \\ \label{ineq4}
\end{eqnarray}
Substituting inequalities \eqref{ineq1}, \eqref{ineq2}, \eqref{ineq3}, and \eqref{ineq4} into \eqref{ineqtodo}
results that for an element $x \in \text{Ker}(I-\Pi) \oplus \text{Rng}(I-\Pi)$ with $\| \mathds{1}+t x\|\leqslant1$ for all $t \geqslant 0$
\begin{eqnarray}
 &&\| (\mathds{1}+t \mathcal{P}x )^n-(\mathds{1}+t \Pi x) (\mathds{1}+t \Pi^2x) \dots (\mathds{1}+t \Pi^nx) \| \nonumber \\
&& \leqslant 2 t \|y \| + 4 n t^2 \left(\|x \| \|y \|+\|y \|^2 \right). \nonumber
\end{eqnarray}
From the substitution $t\rightarrow t/n $, we obtain
\begin{eqnarray}
 &&\| (\mathds{1}+\frac{t}{n} \mathcal{P}x )^n-(\mathds{1}+\frac{t}{n} \Pi x) (\mathds{1}+\frac{t}{n} \Pi^2x) \dots (\mathds{1}+\frac{t}{n} \Pi^nx) \| \nonumber \\
 && \leqslant 2 t \frac{\|y \|}{n} + 4 t^2 \frac{\|x \| \|y \|+\|y \|^2}{n}. \label{touse}
\end{eqnarray}
Thus \eqref{toprove} holds for an element $x \in \text{Ker}(I-\Pi) \oplus \text{Rng}(I-\Pi)$. We claim that \eqref{toprove} holds also for an element 
$x \in \text{Ker}(I-\Pi) \oplus \overline{\text{Rng}(I-\Pi)}$. Let $x_1,x_2 \in \Xi_{\mathcal{B}}$ such that $\| \mathds{1}+\frac{t}{n} x_i\|\leqslant1$ ($i=1,2$) for all $t \geqslant 0$ and $n=1,2,\dots$.
Then,
\begin{eqnarray}
 &&\| (\mathds{1}+\frac{t}{n} \Pi x_1) \dots (\mathds{1}+\frac{t}{n} \Pi^nx_1)-(\mathds{1}+\frac{t}{n} \Pi x_2) \dots (\mathds{1}+\frac{t}{n} \Pi^nx_2) \| \nonumber \\
 &&\leqslant \| (\mathds{1}+\frac{t}{n} \Pi x_1) \dots (\mathds{1}+\frac{t}{n} \Pi^nx_1)-T_{n-1}\|+\sum^{n-2}_{i=1} \|T_{i+1}-T_{i}\|  \nonumber \\
  &&+ \|T_1-(\mathds{1}+\frac{t}{n} \Pi x_2) \dots (\mathds{1}+\frac{t}{n} \Pi^nx_2) \| \nonumber
\end{eqnarray}
with
\begin{equation}
T_i=(\mathds{1}+\frac{t}{n} \Pi x_1) \dots (\mathds{1}+\frac{t}{n} \Pi^ix_1)(\mathds{1}+\frac{t}{n} \Pi^{i+1}x_2)\dots (\mathds{1}+\frac{t}{n} \Pi^nx_2) \nonumber 
\end{equation}
yields
\begin{eqnarray}
 \| (\mathds{1}+\frac{t}{n} \Pi x_1) \dots (\mathds{1}+\frac{t}{n} \Pi^nx_1)-(\mathds{1}+\frac{t}{n} \Pi x_2) \dots (\mathds{1}+\frac{t}{n} \Pi^nx_2) \| \leqslant t \|x_1-x_2\|. \nonumber \\
 \label{ineq5}
\end{eqnarray}
Let $x_2=\mathcal{P}x_2+z$, where $z$ is not of the form $y-\Pi y$. Then for any $\epsilon > 0$ we can take 
an arbitrary $\epsilon' > 0$ satisfying
\begin{equation}
  \epsilon'(t+1) \leqslant \epsilon \nonumber
\end{equation}
and for this $\epsilon'$ there exists $x_1 \in \text{Ker}(I-\Pi) \oplus \text{Rng}(I-\Pi)$ such that
\begin{equation}
 \mathcal{P}x_1 =\mathcal{P}x_2, \quad \text{and} \quad \|x_1-x_2\|< \epsilon'. \nonumber
\end{equation} 
Furthermore, for $\epsilon'> 0$ we can take an $n_0$ such that for every $n>n_0$
\begin{eqnarray}
 \| (\mathds{1}+\frac{t}{n} \mathcal{P}x_2 )^n-(\mathds{1}+\frac{t}{n} \Pi x_1) (\mathds{1}+\frac{t}{n} \Pi^2x_1) \dots (\mathds{1}+\frac{t}{n} \Pi^nx_1) \| < \epsilon', \nonumber \\
 \label{ineq6}
\end{eqnarray}
which is due to the inequality in \eqref{touse}. Thus, with the combination of \eqref{ineq5} and \eqref{ineq6} we have for every $n>n_0$
\begin{eqnarray}
 &&\| (\mathds{1}+\frac{t}{n} \mathcal{P}x_2 )^n-(\mathds{1}+\frac{t}{n} \Pi x_2) (\mathds{1}+\frac{t}{n} \Pi^2x_2) \dots (\mathds{1}+\frac{t}{n} \Pi^nx_2) \| \nonumber \\
 && \leqslant \| (\mathds{1}+\frac{t}{n} \mathcal{P}x_2 )^n-(\mathds{1}+\frac{t}{n} \Pi x_1) (\mathds{1}+\frac{t}{n} \Pi^2x_1) \dots (\mathds{1}+\frac{t}{n} \Pi^nx_1) \| \nonumber \\
 &&+\| (\mathds{1}+\frac{t}{n} \Pi x_1) \dots (\mathds{1}+\frac{t}{n} \Pi^nx_1)-(\mathds{1}+\frac{t}{n} \Pi x_2) \dots (\mathds{1}+\frac{t}{n} \Pi^nx_2) \| \nonumber \\
 && < \epsilon'+ t \|x_1-x_2\|=\epsilon'(t+1) \leqslant \epsilon.
\end{eqnarray}
As $\epsilon >0$  was arbitrary, we have proved our statement.
\end{proof}

\begin{proof}[Proof of Theorem \ref{maintheorem}]
 Fix $t>0$ and define
 \begin{equation}
  A_n=\frac{V_{t/n}-\mathds{1}}{t/n} \in \mathcal{L}(\mathcal{B}). \nonumber
 \end{equation}
The domain $D$ of the strong derivative $\| (V_t-\mathds{1})/t \,x\|$ consists of those $x$ for which the limit $t \to 0$ exists and its a dense subspace of $\mathcal{B}$. Replacing $V_t$ with $\mathcal{P}(V_t)$
it is unequivocal that the domain of those $x$ for which $\lim_{t \to 0}\| \left(\mathcal{P}(V_t)-\mathds{1}\right)/t \,x\|$ ($\mathcal{P}\mathds{1}=\mathds{1} $) exists it coincides with $D$.
The semigroups $e^{t \mathcal{P} (A_n)}$ all satisfy 
\begin{eqnarray}
\|e^{t \mathcal{P} (A_n)} \|_{\mathcal{L}(\mathcal{B})} &\leqslant& e^{-n} \|e^{n \mathcal{P} (V_{t/n})} \|_{\mathcal{L}(\mathcal{B})} \leqslant 
e^{-n} \sum^\infty_{i=0} \frac{n^i  \|\mathcal{P} (V_{t/n}) \|^i_{\mathcal{L}(\mathcal{B})}}{i!} \nonumber \\
& \leqslant& e^{-n} \sum^\infty_{i=0} \frac{n^i}{i!}=1. \label{stabproved}
\end{eqnarray}
Thus, the stability condition \eqref{stability} of Theorem \ref{lemma1} is fulfilled . We define 
\begin{equation}
 A=\lim_{t \to 0} \frac{\mathcal{P}(V_t)-\mathds{1}}{t},
\end{equation}
and it is obvious that $\lim_{n \to \infty} \mathcal{P} (A_n)x=Ax$ for all $x \in D$. As the closure $\overline{A}$ of $A$ generates a $C_0$ semigroup $e^{t \overline{A}}$, then together
with the stability condition these arguments yield that $e^{t \overline{A}}$ is a contraction semigroup, which satisfies
\begin{equation}
 \|e^{t \overline{A}}x-e^{t \mathcal{P} (A_n)}x\| \to 0 \label{p1}
\end{equation}
for all $x \in \mathcal{B}$ as $n \to \infty$ and uniformly for $t\in[0,t_0]$.

On the other hand
\begin{eqnarray}
 \|e^{t \mathcal{P} (A_n)}x- \left[\mathcal{P}(V_{t/n})\right]^n x\|=\|e^{n \mathcal{P} (V_{t/n})-n}x- \left[\mathcal{P}(V_{t/n})\right]^n x\|, \nonumber 
\end{eqnarray}
which with the notation $S=\mathcal{P}(V_{t/n})$ and Lemma \ref{lemma2} yields
\begin{eqnarray}
 \|e^{t \mathcal{P} (A_n)}x- \left[\mathcal{P}(V_{t/n})\right]^n x\|&\leqslant& \sqrt{n} \|\mathcal{P}(V_{t/n})x-x \| \nonumber \\
 &\leqslant&\frac{t}{\sqrt{n}} \|\mathcal{P} (A_n) x \| \to 0 \nonumber
\end{eqnarray}
for all $x \in D$ as $n \to \infty$ and uniformly for $t\in (0,t_0]$.
\begin{equation}
 \|e^{t \mathcal{P} (A_n)}- \left[\mathcal{P}(V_{t/n})\right]^n \|_{\mathcal{L}(\mathcal{B})} \leqslant 2 \nonumber
\end{equation}
is due to \eqref{stabproved} and using Lemma \ref{lemma3}, we obtain
\begin{equation}
 \|e^{t \mathcal{P} (A_n)}x- \left[\mathcal{P}(V_{t/n})\right]^n x\| \to 0 \label{p2}
\end{equation}
for all $x \in \mathcal{B}$ as $n \to \infty$.

Lastly, we have
\begin{eqnarray}
 \Pi(V_{t/n})\Pi^2(V_{t/n})\dots \Pi^n(V_{t/n})=(\mathds{1}+\frac{t}{n} \Pi A_n) (\mathds{1}+\frac{t}{n} \Pi^2 A_n) \dots (\mathds{1}+\frac{t}{n} \Pi^nA_n). \nonumber
\end{eqnarray}
As
\begin{equation}
 \|\Pi(V_{t/n})\Pi^2(V_{t/n})\dots \Pi^n(V_{t/n})-\left[\mathcal{P}(V_{t/n})\right]^n\|_{\mathcal{L}(\mathcal{B})} \leqslant 2 \nonumber
\end{equation}
for all $t \geqslant 0$ and $n=1,2,\dots$, the application of Lemma \ref{lemma4} first and then Lemma \ref{lemma3} results
\begin{eqnarray}
 \|\Pi(V_{t/n})\Pi^2(V_{t/n})\dots \Pi^n(V_{t/n})x- \left[\mathcal{P}(V_{t/n})\right]^nx\|\to 0 \label{p3}
\end{eqnarray}
for all $x \in \mathcal{B}$ as $n \to \infty$. A combination of \eqref{p1}, \eqref{p2} and \eqref{p3} proves the statement of the theorem.
\end{proof}

\begin{example}Let us consider the Hilbert space $L^2(\mathbb{R})$ of the square-integrable functions on the real line. The Fourier transform 
\begin{equation}
 (\mathcal{F}f)(x)=\lim_{N \to \infty} \frac{1}{\sqrt{2 \pi}} \int_{|y| \leqslant N} e^{-ixy} f(y) dy \nonumber
\end{equation}
is a unitary map of $L^2(\mathbb{R})$ onto $L^2(\mathbb{R})$ with the property
\begin{equation}
 (\mathcal{F}^{-1}f)(x)=(\mathcal{F}f)(-x), \quad \forall f \in L^2(\mathbb{R}). \nonumber
\end{equation}
Define the following $C_0$ semigroups
\begin{eqnarray}
 (V_t f)(x)&=&e^{-itx}f(x), \nonumber \\
 (U_t f)(x)&=&f(x-t), \quad \forall f \in L^2(\mathbb{R}), \nonumber
\end{eqnarray}
and
\begin{equation}
 \Pi (V_t)=\mathcal{F} V_t \mathcal{F}^{-1}. \nonumber
\end{equation}
According to Theorem \ref{maintheorem}, we have to determine the projector $\mathcal{P}$ which projects onto the linear subspace $\text{Ker}(I-\Pi)$. However,
\begin{equation}
 \mathcal{F} V_t \mathcal{F}^{-1}=U_{-t}, \quad \mathcal{F}^{-1} V_t \mathcal{F}=U_{t}, \nonumber
\end{equation}
which shows that $\mathcal{P}(V_t)=\mathds{1}$ (the identity operator on $L^2(\mathbb{R})$) and therefore
\begin{equation}
 \| \Pi(V_{t/n})\Pi^2(V_{t/n})\Pi^3(V_{t/n}) \dots \Pi^n(V_{t/n})f-f\|\to 0 \nonumber 
\end{equation}
for all $f \in L^2(\mathbb{R})$ as $n \to \infty$.
\end{example}
\begin{example}
Let $T_1(t)$ and $T_2(t)$ be two $C_0$ contraction semigroups on a Banach space $\mathcal{B}$. The generators are $A_1$ and $A_2$ with the respective 
dense domains $D(A_1)$ and $D(A_2)$. Consider the semigroup
\begin{equation}
 V_t=\begin{pmatrix}
     \frac{T_1(t)+T_2(t)}{2} & \frac{T_1(t)-T_2(t)}{2} \\
     \frac{T_1(t)-T_2(t)}{2} & \frac{T_1(t)+T_2(t)}{2}  
     \end{pmatrix} \nonumber
\end{equation}
on the direct sum $\mathcal{B} \oplus \mathcal{B}$, which is also a Banach space. Then, the contraction map 
\begin{equation}
 \Pi \begin{pmatrix}
     a & b \\
     c & d  
     \end{pmatrix}=\begin{pmatrix}
     a & -b \\
     -c & d  
     \end{pmatrix}, \quad a,b,c,d \in \mathcal{L}(\mathcal{B}) \nonumber
\end{equation}
implies that 
\begin{equation}
 \mathcal{P}(V_t)=\begin{pmatrix}
     \frac{T_1(t)+T_2(t)}{2} & 0 \\
     0 & \frac{T_1(t)+T_2(t)}{2}  
     \end{pmatrix}. \nonumber
\end{equation}
The generator $A$ in Theorem \ref{maintheorem} reads
\begin{equation}
 A=\frac{A_1+A_2}{2}\mathds{1},  \nonumber
\end{equation}
where $\mathds{1}$ is the identity operator on $\mathcal{B} \oplus \mathcal{B}$ and the domain of $(A_1+A_2)/2$ is $D(A_1)\cap D(A_2)$, a subspace that might be $\{0\}$ in general.
\end{example}
\begin{corollary}\label{cor}
Let $V: \mathbb{R_+} \to \mathcal{L}(\mathcal{B})$ and $\Pi$ a contraction on $\mathcal{L}(\mathcal{B})$ satisfying the assumptions of Theorem \ref{maintheorem}. If for 
fixed $t>0$ we take a a strictly increasing sequence of natural numbers $(k_n)_{n \in \mathbb{N}}$ and a positive null sequence $(t_n)_{n \in \mathbb{N}}$ such that $k_nt_n \to t$
as $n \to \infty$, then
\begin{equation}
 \Pi(V_{t_n})\Pi^2(V_{t_n}) \dots \Pi^{k_n}(V_{t_n}) \nonumber
\end{equation}
converges to $e^{t\overline{A}}$ in the strong operator topology.
\end{corollary}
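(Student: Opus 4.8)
The plan is to rerun the proof of Theorem~\ref{maintheorem} essentially verbatim, with the coupled data $(t/n,\,n)$ replaced by the decoupled data $(t_n,\,k_n)$, and to use the hypothesis $k_nt_n\to t$ only at the one place where the exact identity $n\cdot(t/n)=t$ entered before. Fix $t>0$ and set $A_n=(V_{t_n}-\mathds{1})/t_n\in\mathcal{L}(\mathcal{B})$. Because $(t_n)$ is a positive null sequence we have $t_n\to0$, so, exactly as in Theorem~\ref{maintheorem}, $\mathcal{P}(A_n)x\to Ax$ for every $x$ in the dense domain $D$ of $A=\lim_{s\to0}(\mathcal{P}(V_s)-\mathds{1})/s$. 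The stability estimate also carries over unchanged: writing $s\,\mathcal{P}(A_n)=(s/t_n)\bigl(\mathcal{P}(V_{t_n})-\mathds{1}\bigr)$ and repeating the computation in \eqref{stabproved} gives $\|e^{s\mathcal{P}(A_n)}\|_{\mathcal{L}(\mathcal{B})}\leqslant1$ for all $s\geqslant0$, so \eqref{stability} holds with $M=1$ and $\omega=0$. Theorem~\ref{lemma1} then yields that $\overline{A}$ generates the $C_0$ contraction semigroup $e^{s\overline{A}}$ and that $e^{s\mathcal{P}(A_n)}x\to e^{s\overline{A}}x$ strongly, uniformly for $s$ in each compact interval $[0,t_0]$.

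Next I would reproduce the three-term split \eqref{p1}--\eqref{p3}. For the analogue of \eqref{p1}, set $s_n=k_nt_n$ and pick $t_0>t$, so that $s_n\in[0,t_0]$ for large $n$; then
\[
\|e^{s_n\mathcal{P}(A_n)}x-e^{t\overline{A}}x\|\leqslant\|e^{s_n\mathcal{P}(A_n)}x-e^{s_n\overline{A}}x\|+\|e^{s_n\overline{A}}x-e^{t\overline{A}}x\|,
\]
where the first term vanishes by uniform convergence on $[0,t_0]$ and the second by strong continuity of $e^{s\overline{A}}$ together with $s_n\to t$. For the analogue of \eqref{p2}, note $e^{s_n\mathcal{P}(A_n)}=e^{k_n(\mathcal{P}(V_{t_n})-\mathds{1})}$ and apply Lemma~\ref{lemma2} with $S=\mathcal{P}(V_{t_n})$ and $k_n$ in place of $n$:
\[
\|e^{s_n\mathcal{P}(A_n)}x-[\mathcal{P}(V_{t_n})]^{k_n}x\|\leqslant\sqrt{k_n}\,\|\mathcal{P}(V_{t_n})x-x\|=\sqrt{k_n}\,t_n\,\|\mathcal{P}(A_n)x\|
\]
for $x\in D$. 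Since $\sqrt{k_n}\,t_n=\sqrt{k_nt_n}\,\sqrt{t_n}\to0$ and $\|\mathcal{P}(A_n)x\|$ is bounded in $n$, the right-hand side tends to $0$, and Lemma~\ref{lemma3} (both operators are contractions, so their difference is bounded by $2$) extends this to all $x\in\mathcal{B}$.

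For the analogue of \eqref{p3} I would write $\Pi(V_{t_n})\dots\Pi^{k_n}(V_{t_n})=(\mathds{1}+t_n\Pi A_n)\dots(\mathds{1}+t_n\Pi^{k_n}A_n)$ and invoke the estimate behind Lemma~\ref{lemma4} with step $t_n$ and $k_n$ factors. Before the substitution $t\to t/n$ performed there, that estimate bounds the norm of $(\mathds{1}+\tau\mathcal{P}x)^N-\prod_{i=1}^{N}(\mathds{1}+\tau\Pi^ix)$ by $2\tau\|y\|+4N\tau^2(\|x\|\|y\|+\|y\|^2)$ for $x=\mathcal{P}x+y-\Pi y$; taking $\tau=t_n$ and $N=k_n$ sends it to $0$ because $t_n\to0$ and $k_nt_n^2=(k_nt_n)\,t_n\to0$, and Lemma~\ref{lemma3} again passes from a dense set to all of $\mathcal{B}$ using the uniform bound $2$. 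Combining the three analogues gives $\Pi(V_{t_n})\dots\Pi^{k_n}(V_{t_n})x\to e^{t\overline{A}}x$ for every $x\in\mathcal{B}$, which is the assertion.

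The genuinely new point---and the one I expect to be the main obstacle---is the decoupling of the step size from the number of factors. In Theorem~\ref{maintheorem} the Chernoff exponent sits exactly at time $t$, whereas here $e^{k_n(\mathcal{P}(V_{t_n})-\mathds{1})}$ lives at the moving time $s_n=k_nt_n$, which only converges to $t$; reconciling this moving time with the fixed limit $e^{t\overline{A}}$ is precisely what forces the uniform-on-compacts conclusion of Theorem~\ref{lemma1} to be combined with strong continuity of the limit semigroup, as in the split above. Once that is arranged, every remaining estimate reduces to the two smallness facts $\sqrt{k_n}\,t_n\to0$ and $k_nt_n^2\to0$, both immediate from $k_nt_n\to t$ and $t_n\to0$; the only delicate inherited step is the control of the $n$-dependent operators $A_n$ in Lemma~\ref{lemma4}, which goes through exactly as in the proof of Theorem~\ref{maintheorem}.
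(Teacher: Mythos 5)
Your proposal is correct and shares the paper's overall skeleton: first compare $\Pi(V_{t_n})\Pi^2(V_{t_n})\dots\Pi^{k_n}(V_{t_n})$ with $\left[\mathcal{P}(V_{t_n})\right]^{k_n}$ via the estimate underlying Lemma \ref{lemma4} together with Lemma \ref{lemma3}, then show $\left[\mathcal{P}(V_{t_n})\right]^{k_n}x \to e^{t\overline{A}}x$. The genuine difference is in the second step: the paper disposes of it with a citation to Corollary $5.4$ of \cite{Engel} (the decoupled-sequence version of Chernoff's product formula), whereas you re-derive it from ingredients already in the paper --- the Trotter--Kato theorem (Theorem \ref{lemma1}) with the stability bound, Lemma \ref{lemma2} applied with $k_n$ factors, and the moving-time split $\|e^{s_n\mathcal{P}(A_n)}x-e^{t\overline{A}}x\|\leqslant\|e^{s_n\mathcal{P}(A_n)}x-e^{s_n\overline{A}}x\|+\|e^{s_n\overline{A}}x-e^{t\overline{A}}x\|$ handled by uniform-on-compacts convergence plus strong continuity of the limit semigroup. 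Your route is longer but self-contained, and it isolates exactly where the hypothesis $k_nt_n\to t$ enters (the smallness facts $\sqrt{k_n}\,t_n\to0$ and $k_nt_n^2\to0$, and continuity of $s\mapsto e^{s\overline{A}}x$ at $s=t$); the paper's route is shorter at the cost of delegating precisely the decoupling difficulty you call ``the main obstacle'' to the cited result. In one respect you are more careful than the paper: Lemma \ref{lemma4} as stated concerns step $t/n$ with exactly $n$ factors, so the paper's phrase ``applying Lemma \ref{lemma4}'' to step $t_n$ with $k_n$ factors is implicitly an appeal to the pre-substitution bound $2\tau\|y\|+4N\tau^2\left(\|x\|\|y\|+\|y\|^2\right)$ with $\tau=t_n$, $N=k_n$, which is exactly what you make explicit. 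Finally, both you and the paper inherit the same unaddressed delicacy from the proof of Theorem \ref{maintheorem}: the operator actually inserted into Lemma \ref{lemma4} is $A_n$, which varies with $n$ (so the quantities $\|x\|$, $\|y\|$ in that bound are $n$-dependent); you flag this rather than resolve it, which puts you on equal footing with the paper's own proof.
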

\begin{proof}
 Applying Lemma \ref{lemma4}, we have
\begin{eqnarray}
\|\Pi(V_{t_n})\Pi^2(V_{t_n})\dots \Pi^{k_n}(V_{t_n})x - \left[\mathcal{P}(V_{t_n})\right]^{k_n}x \| \to 0\label{p4}
\end{eqnarray}
for all $x \in D$ as $n \to \infty$ and uniformly for $t\in (0,t_0]$. Since $V_{t/n}$s and $\Pi$ are contractions, Lemma \ref{lemma3} yields that we have the result for all vectors in $\mathcal{B}$.
The proof of
\begin{equation}
 \lim_{n \to \infty} \|e^{t \overline{A}}x-\left[\mathcal{P}(V_{t_n})\right]^{k_n}x\|= 0 \nonumber
\end{equation}
for all $x \in \mathcal{B}$ is given in Corollary $5.4$ of \cite{Engel}.
\end{proof}
\begin{example}
Let us consider the product formula in \eqref{13} with only two unitary operators
\begin{equation}
\lim_{n \to \infty} \left(u_1 \ee^{\frac{t}{n} x} u_2 \ee^{\frac{t}{n} x} \right)^{\frac{n}{2}}, \nonumber  
\end{equation}
where $x$ is the generator of the $C_0$ contraction semigroup $e^{tx}$ acting on a Hilbert space $\mathcal{H}$ and  $u_1u_2=u \neq I$. Then, 
\begin{eqnarray}
&&\lim_{n \to \infty} \left(u_1 \ee^{\frac{t}{n} x} u_2 \ee^{\frac{t}{n} x} \right)^{\frac{n}{2}} \left(u^\dagger\right)^{\frac{n}{2}}=\lim_{n \to \infty} V_{2t/n}\Pi(V_{2t/n}) 
\dots \Pi^{\frac{n}{2}-1}(V_{2t/n}), \nonumber \\
&&V_{t}=u_1 \ee^{\frac{t}{2} x} u^\dagger_1 u \ee^{\frac{t}{2} x} u^\dagger, \quad  \Pi x= u x u^\dagger. \nonumber
\end{eqnarray}
$\mathcal{P}$ projects onto the linear subspace 
\begin{equation}
\{x \in \mathcal{B}(\mathcal{H}): [u,x]=0\}. \nonumber
\end{equation}
When $x$ is bounded, Theorem \ref{maintheorem} and Corollary \ref{cor} yields
\begin{equation}
 \overline{A}=A=\frac{1}{2} \mathcal{P}\left( u_1 x u^\dagger_1 + u x u^\dagger\right), \nonumber
\end{equation}
otherwise
\begin{equation}
 A=\lim_{t \to 0} \frac{\mathcal{P}(V_t)-\mathds{1}}{t}. \nonumber
\end{equation}
\end{example}
\begin{remark}
We have provided an extended version of Chernoff's product formula with iterations of a contraction acting on the one-parameter family of linear contractions. The result can be applied to many
mathematical problems in the field of dynamical control of quantum systems. However, when the bare evolution of the system is described by a non-autonomous differential equation then the results of
Ref. \cite{Vuillermot} should be generalized in the direction discussed here. We leave these questions open for now.
\end{remark}
 
\section*{Acknowledgement}

This work is supported by the European Union's Horizon 2020 research and innovation programme under 
Grant Agreement No. 732894 (FET Proactive HOT).

\bibliographystyle{amsplain}

\begin{thebibliography}{99}

\bibitem{Bernad} J. Z. Bern\'ad, \textit{Dynamical control of quantum systems in the context of mean ergodic theorems}, J. Phys. A: Math. Theor. \textbf{50} (2017), no. 6, 065303. 
%
\bibitem{Krengel} U. Krengel, \textit{Ergodic Theorems}, de Gruyter, Berlin, 1985.
%
\bibitem{Arendt} W. Arendt and M. Ulm, \textit{Trotter's product formula for projections}, Ulmer Seminare (1997), 394--399.
%
\bibitem{Matolcsi} M. Matolcsi and R. Shvidkoy, \textit{Trotter's product formula for projections}, Arch. Math. \textbf{81} (2003), no. 3, 309--317.
%
\bibitem{Hillier} C. Arenz, R. Hillier, M. Fraas, and D. Burgarth, \textit{Distinguishing decoherence from alternative quantum theories by dynamical decoupling}, Phys. Rev. A \textbf{92} (2015), no. 2, 022102.
%
\bibitem{Chernoff1} P. R. Chernoff, \textit{Note on product formulas for operator semigroups}, J. Funct. Anal. \textbf{2} (1968), no. 2, 238--242.
%
\bibitem{Chernoff2} P. R. Chernoff, \textit{Semigroup product formulas and addition of unbounded operators}, Bull. Amer. Math. Soc. \textbf{76} (1970), no. 2, 395--398.
%
\bibitem{Trotter1} H. Trotter, \textit{On the product of semigroups of operators}, Proc. Amer. Math. Soc. \textbf{10} (1959), no. 4, 545--551.
%
\bibitem{Exner} P. Exner P and T. Ichinose, \textit{A Product Formula Related to Quantum Zeno Dynamics}, Ann. H. Poincaré \textbf{6} (2005), no. 6, 195--215.
%
\bibitem{Facchi} P. Facchi and S. Pascazio, \textit{Quantum Zeno dynamics: mathematical and physical aspects}, J. Phys. A: Math. Theor. \textbf{41} (2008), no. 49, 493001.
%
\bibitem{Barankai} N. Barankai and Z. Zimbor\'as, \textit{Generalized quantum Zeno dynamics and ergodic means}, arXiv:1811.02509.
%
\bibitem{Mobus} T. M\"obus and M. M. Wolf, \textit{Quantum Zeno effect generalized}, arXiv:1901.09393.
%
\bibitem{Engel} K.-J. Engel and R. Nagel, \textit{One-Parameter Semigroups for Linear Evolution Equations}, Springer, New York, 2000.
%
\bibitem{Vuillermot} P.-A. Vuillermot, \textit{A generalization of Chernoff's product formula
for time-dependent ope-rators}, J. Funct. Anal. \textbf{259} (2010), no. 11, 2923--2938.
%
\bibitem{Tanabe} T. Kato and H. Tanabe, \textit{On the abstract evolution equation}, Osaka J. Math. \textbf{14} (1962), no. 1, 107--133.
%
\bibitem{Vuillermot2} P.-A. Vuillermot, W. F. Wreszinski, and V.A. Zagrebnov, \textit{A general Trotter--Kato formula for a class of evolution operators}, 
J. Funct. Anal. \textbf{257} (2009), no. 7, 2246--2290.
%
\bibitem{Lorch} E. R. Lorch, \textit{Means of iterated transformations in reflexive vector spaces}, Bull. Amer. Math. Soc. \textbf{45} (1939), no. 1, 945--947.
%
\bibitem{Yosida} K. Yosida, \textit{Mean ergodic theorem in Banach spaces}, Proc. Imp. Acad. Tokyo \textbf{14} (1938), no. 8, 292--294.
%
\bibitem{Trotter2} H. Trotter, \textit{Approximation of semi--groups of operators}, Pacific J. Math. \textbf{8} (1958), no. 4, 887--919.
%
\bibitem{Kato} T. Kato, \textit{Remarks on pseudo--resolvents and infinitesimal generators of semi--groups}, Proc. Japan Acad. \textbf{35} (1959), no. 8, 467--468.
%
\bibitem{Dunford} N. Dunford and J. T. Schwartz, \textit{Linear Operators, Part I: General Theory}, Interscience Publishers, New York, 1964.
\end{thebibliography}

\end{document}